\newtheorem{theorem}{Theorem}
\newtheorem{corollary}{Corollary}
\newtheorem{assumption}{Assumption}
\newcommand{\blind}{1}
\renewcommand{\baselinestretch}{1.75}
\newcommand{\Hset}{\mathcal{H}}
\newcommand{\bY}{\mathbf{y}}
\newcommand{\prob}{\mathbb{P}}
\newcommand{\cpos}{\mathcal{Y}}
\newcommand{\boldy}{\mathbf{Y}}
\newcommand{\bX}{\mathbf{X}}
\newcommand{\bx}{\mathbf{x}}
\newcommand{\epos}{\mathcal{E}}
\newcommand{\apos}{\mathcal{E}^c}
\newcommand{\Phat}{\widehat{\prob}}
\theoremstyle{plain}
\newtheorem{definition}{Definition}
\begin{document}

\def\spacingset#1{\renewcommand{\baselinestretch}%
{#1}\small\normalsize} \spacingset{1}


\if1\blind
{
  \title{Modeling Data Analytic Iteration With\\ Probabilistic Outcome Sets}
  \author{Roger D. Peng\thanks{
    The authors gratefully acknowledge Lucy D'Agostino McGowan, Tiffany Timbers, and Alyssa Columbus for their constructive feedback and suggestions to improve this manuscript.}\hspace{.2cm}\\
    Department of Statistics and Data Sciences, University of of Texas, Austin\\
    and \\
    Stephanie C. Hicks \\
    Department of Biostatistics and Department of Biomedical Engineering \\   
    Johns Hopkins University}
  \maketitle
} \fi

\if0\blind
{
  \bigskip
  \bigskip
  \bigskip
  \begin{center}
    {\LARGE Modeling Data Analytic Iteration With \\  Probabilistic Outcome Sets}
\end{center}
  \medskip
} \fi

\bigskip
\begin{abstract}
In 1977 John Tukey described how in exploratory data analysis, data analysts use tools, such as data visualizations, to separate their expectations from what they observe. In contrast to statistical theory, an underappreciated aspect of data analysis is that a data analyst must make decisions by comparing the observed data or output from a statistical tool to what the analyst previously expected from the data. However, there is little formal guidance for how to make these data analytic decisions as statistical theory generally omits a discussion of who is using these statistical methods. In this paper, we propose a model for the iterative process of data analysis based on the analyst's expectations, using what we refer to as expected and anomaly \textit{probabilistic outcome sets}, and the concept of statistical information gain. Here, we extend the basic idea of comparing an analyst's expectations to what is observed in a data visualization to more general analytic situations. Our model posits that the analyst's goal is to increase the amount of information the analyst has relative to what the analyst already knows, through successive analytic iterations. We introduce two criteria--\textit{expected information gain} and \textit{anomaly information gain}--to provide guidance about analytic decision-making and ultimately to improve the practice of data analysis. Finally, we show how our framework can be used to characterize common situations in practical data analysis.
\end{abstract}

\noindent%
{\it Keywords:} data analysis, Shannon entropy, information, exploratory data analysis, confirmatory data analysis
\vfill

\newpage

\newpage
\spacingset{1.9} 
\section{Introduction}
\label{sec:intro}

An important goal of statistical theory is to characterize the behavior of data analytic tools in the presence of random variation in the data. These theoretical properties are derived by making some assumptions about the data generating mechanism and by analyzing the mechanisms of the analytic tools themselves.  The properties of statistical methods are typically presented in a manner that is independent of any specific person applying the methods. Therefore, two data analysts employing the same model on the same data (making the same assumptions) should agree about the method's properties, which is useful when objectively evaluating or comparing a set of methods. 

In the practice of data analysis, data analysts apply numerous tools to data and must be familiar with their theoretical properties. In addition, analysts make decisions in response to observing the data or the output obtained from applying statistical and computational methods to the data. The decisions made by the analyst drive an iterative analytic process that ultimately leads to new knowledge or evidence regarding a scientific, business, or policy question~\citep{kimhardin2021, peng2022perspective}. It is natural to assume that different analysts, over the course of an entire analysis, might make different decisions regarding what tools to apply to the data and how to react to the output the tools generate~\citep{breznau2022observing}. These decisions may be influenced by the individual analyst's (a)~familiarity with data collection procedures;  (b)~knowledge of the properties of statistical techniques; and (c)~experience with computing, all of which might affect the analyst's expectations for observing the output of a given tool. This sense-making process is critically dependent on the analyst's perspective and previous understanding of the underlying phenomena being studied and the data~\citep{grolemund2014cognitive}. 

For example, an analyst who fits a linear regression model and obtains a negative slope when they were expecting a positive slope can react to that output in a variety of ways. An experienced practitioner might be familiar with possible reasons for why this can occur (such as outliers) and choose a diagnostic tool as a next step. A novice practitioner might not be familiar with the possibility of outliers and might choose to accept the result as part of the expected variation in the data. It may not be clear who is correct in this situation, but it is clear that different analysts can take different paths at this juncture.  

Others have noted that there is relatively little guidance about how to carry out a data analysis or how to make data analytic decisions~\citep{unwin2001patterns,wild1994embracing,peng2017commenton}. A contributor to this reality is that statistical theory generally omits a discussion of \textit{who} is using these statistical methods and applying them to the data. Such an omission allows for an objective discussion of statistical methods and their properties, but it also omits any information or background that an individual analyst might bring to a problem. Bayesian approaches allow for individual contributions via prior distributions, but only in the statistical modeling aspects of an analysis. Other important areas of data analysis, such as data processing, wrangling, tidying, and exploration have no corresponding framework for guiding the analyst forward.

The iterative nature of data analysis, which is driven forward by the numerous decisions made by the analyst, is another characteristic of practical data science that is often missing from statistical theory frameworks~\citep{peng2021diagnosing,peng2022perspective}. Real-world data analyses will typically involve long sequences of data operations, some of which may involve statistical models and some of which may involve basic data wrangling operations. At the end of each iteration, the analyst must decide what to do next, if anything: What tool to apply, what new information to gather, or perhaps what expertise to consult. While statistical theory is helpful in the areas involving the application of statistical models, what is largely missing is a general framework for describing how to make decisions when analyzing data. While previous work has suggested the outlines of what this framework might look like \citep{wild1999statistical, grolemund2014cognitive,greenhouse2018teaching,lovett2000applying}, details of how to operationalize it, in practice for analyzing data, are sparse.

John Tukey wrote in his landmark book \textit{Exploratory Data Analysis} that ``The greatest value of a picture is when it \textit{forces} us to notice what we never expected to see."~\citep{tukey1977exploratory}. While his comment was in reference to data visualization in particular, it is possible to take the ideas embedded in his comment, such as the dichotomy between expectations and observations, to build a more general framework for data analytic iteration. In this paper, we present a model for characterizing the iterative application of analytic tools to data and the information that is obtained. Our approach incorporates the unique perspective of the analyst conducting the analysis by considering what the analyst expects to observe before applying a tool to the data. A key question that we aim to address here is how does an analyst decide what to do next in an analytic iteration?

We propose a model for the iterative process of data analysis based on the analyst's expectations and the concept of statistical information gain. Our model posits that the analyst's goal is to increase the amount of information the analyst has, relative to what the analyst already knows, through successive analytic iterations. We introduce the notion of \textit{probabilistic outcome sets} to formalize what the analyst expects to see in an analysis output and what would be considered unexpected. We then propose two criteria---expected information gain and anomaly information gain---for modeling how an analyst might choose the next step in an analysis. Finally, we discuss how this model can be used to describe common data science practices and outline some of the possibilities of using this model in more complex data analyses.

\section{Data Generation Mechanisms in Analytic Iteration}
\label{sec:generationmechansims} 

Analytic iteration is defined here as the process by which an analyst (a)~plans a data operation by selecting an analytic tool (e.g. calculate the empirical mean) to apply to a given dataset, (b)~sets their expectations about the potential outcomes that could arise from the data operation, (c)~executes the data operation, (d)~compares the observed result to what the analyst expected prior to executing the operation, and (e)~decides which tool to use in the next iteration. Each step in the analytic iteration is designed to produce information via output such as numerical summaries, graphical summaries, or fitted models. 

The analyst's general goal in a data analysis is to learn about the  mechanism that produced the data while asking some scientific, policy, or business question related to that dataset. The observed data $\mathbf{X}$ are sampled from population $\mathcal{X}$ according to distribution~$F$ (which is assumed to be unknown). The analyst's best estimate of $F$ at a specific moment in an analytic iteration is called $\widehat{F}$. We assume the densities for these distribution functions exist, so that $f$ is the density associated with $F$ and $\hat{f}$ is the density associated with $\widehat{F}$. 

While a data analyst can examine the raw data directly, typically such examination will occur through the output of some tool $T$ applied to $\bX$. For example, the distribution of a variable may be examined by looking at a histogram. The output of $T$ is $\boldy=T(\bX)$ and we will be concerned with the probability distribution of this output, which will be denoted by $\prob$. Let the set $A$ represent a collection of possible values for $\boldy$~(we will be more precise about this set in Section~\ref{sec:outcomesets}). The probability of the event that the output $\boldy$ falls in $A$ is denoted as $\prob(\boldy\in A)\stackrel{\Delta}{=}\prob(\bX\in T^{-1}(A))=\int _{T^{-1}(A)}\,dF$. Here, $T^{-1}$ is the inverse mapping of $T$ and $T^{-1}(A)$ represents the various configurations of $\bX$ that result in $\boldy$ when $T$ is applied to $\bX$. The analyst's best estimate of the probability that $\boldy$ falls in a set $A$ is denoted as $\Phat(\boldy\in A)\stackrel{\Delta}{=}\Phat(\bX\in T^{-1}(A))=\int_{T^{-1}(A)}\,d\widehat{F}$.

We propose that analytic iteration is a sequential process by which the analyst learns about $F$ while asking some question, and investigating that question using the observed data $\mathbf{X}$ from $F$. This process is constrained by the requirements of the scientific, policy, or business question being asked. By asking a sequence of questions and applying a corresponding sequence of informative analytic tools to the data, the analyst gains information about $F$ at each step of the iteration. The goal for the analyst is for $\widehat{F}$ to be close enough to $F$ to serve the analyst's interests as well as the interests of any other stakeholders.

For the remainder of the paper we will assume that we are in the middle of a single iteration of a larger data analysis. We will use $\widehat{F}$ to indicate the analyst's \textit{current} best estimate of $F$ and $\Phat$ to indicate the analyst's current best estimate of $\prob$, the probability distribution associated with the output of tool $T$. While it might be more specific to denote $\Phat_T$ instead of $\Phat$, given that $\Phat$ is defined conditional on the tool $T$ being chosen, we omit the subscript in the remainder for the sake of brevity and because in a single iteration, the tool $T$ will be fixed.

\section{Informative Tools for Analytic Iteration}
\label{sec:tools}

At each stage of an analytic iteration, an analyst applies a tool $T$ to the data $\bX$ to produce output $\boldy$. The tool $T$ could range from a basic computation, such as computing the sample mean of a column in a table, to a complex statistical model. An initial problem to address is determining whether a given tool $T$ is informative for a given problem when it is applied to $\bX$. In the following section, we formally describe what is meant by an informative tool. 

For a given question or problem, we begin by considering a collection of competing hypotheses in $\Hset$ about $\mathbf{X}$, which we refer to as the hypothesis set. For example, the hypotheses in the hypothesis set could be defined on whether there are no missing values ($H_1$) or there are some missing values ($H_2$) in a datasest $\bX$. Similarly, a set of hypotheses could be defined on whether the data $\bX$ follow a symmetric ($H_1$) or right-skewed ($H_2$) distribution. Alternatively, a set of hypotheses could involve a statement about a population parameter~$\theta$, where one hypothesis is $H_1: \theta=\theta_1$ for some $\theta_1$, a second hypothesis is $H_2: \theta = \theta_2$, and third hypothesis is $H_3: \theta=\theta_3$.

For simplicity, in the definition of an informative tool below, we consider the scenario where the hypothesis set $\Hset$ only contains two competing hypotheses $\Hset = \{H_1, H_2\}$. The basic concept here is that tool is an informative tool if the expected outcome of applying that tool to the data is different under $H_1$ relative to $H_2$. 

\begin{definition}[Informative tool in analytic iteration]
\label{def:informative}
A tool $T$ applied to a given dataset $\bX\sim F$ is an informative tool for discriminating between competing hypotheses $H_1$ and $H_2$ if 
$\mathbb{E}_{F_{H_1}}[T(\bX)] \neq \mathbb{E}_{F_{H_2}}[T(\bX)]$.
\end{definition}

The property of informativeness is not an inherent or intrinsic property of any given data analytic element or tool. Informativeness is something that can only be determined in the context of the hypothesis set $\Hset$ under consideration. We cannot say whether a boxplot or histogram is inherently informative or not for all types of questions asked about $\mathbf{X}$. For example, a histogram may be informative for discriminating between hypotheses about the symmetry or skewness of the data. However, the histogram will not be informative for discriminating between hypotheses concerning the correlation of a variable with another variable. For more general hypothesis sets $\Hset$, which may for example contain an infinite number of hypotheses, an informative tool is expected to produce distinct output from the data under each distinct hypothesis. 
In the remainder of this paper, we will assume that the tools applied to the data are informative in the sense of Definition~\ref{def:informative}.

\section{Potential Outcome Sets for Analytic Iteration}
\label{sec:outcomesets}

We begin by assuming a data operation has been planned, so that the analyst has chosen a tool to apply to a dataset. In this section, we define three sets of events that could occur related to the potential (or possible) outcome of applying a given analytic tool to a dataset in a single analytic iteration. We describe three \textbf{potential outcome sets} (Definitions \ref{def:comp-pot-outcome}-\ref{def:anom-pot-outcome}) that characterize events \textit{before} the analyst applies a given analytic tool to the data. 

\subsection{Complete Potential Outcome Set}

Once an analyst has chosen an analytic tool to apply to the data, it is usually possible to characterize in some way the set in which the output might fall, before the tool is actually applied to the data. In the following, we consider the unobserved data as a random variable~$\bX$. For a given tool, $T$ we aim to describe the entire set of possible outcomes from the tool when applied to $\bX$. We call this set the \textbf{complete potential outcome set}. 

\begin{definition}[Complete potential outcome set]
    Let $\bX\in\mathcal{X}$ be a random variable with a cumulative distribution function $F$. We consider $\mathbf{Y} = T(\bX)$, defined as a mapping from the original sample space of $\bX$, 
    $\mathcal{X} = \{\bx: f(\bx) >0 \}$, to a new sample space $\mathcal{Y}$, the sample space for the random variable $\boldy$, i.e. $T: \mathcal{X}\mapsto \mathcal{Y}$.
    We associate $T$ with an inverse mapping, denoted by $T^{-1}$, which maps subsets of $\mathcal{Y}$ to subsets of $\mathcal{X}$, and is defined by 
    $$T^{-1}(A) = \{\bx \in \mathcal{X}: T(\bx) \in A \}.$$
    We also assume the mathematical domain (or input) of $T$ is valid for the data type $\bX$. Then, the sample space $\mathcal{Y} = \{\bY: \bY = T(\bx) \text{ for some } \bx \in \mathcal{X} \}$, which is the image of $\mathcal{X}$ under $T$, is the \textbf{complete potential outcome set} that describes all the potential outcomes of the analytic step for applying tool $T$ to the data $\bX$.
    \label{def:comp-pot-outcome}
\end{definition}

For example, consider a set of observations $\bx = (x_1, \ldots, x_n)$ that comes in the form of a collection of $n$ non-negative integers with sample space $\mathcal{X} = \mathbb{Z}^{+}_{0} = \{0, 1, \ldots, \}$. If we consider the sample mean as a tool $T$, $\bY = T(\bx) = \frac{1}{n}\sum_{i=1}^n x_i$, then the complete potential outcome set is the positive half of the real line $[0, \infty)$, or $\mathcal{Y} = \{\bY: \bY=T(\bx), \bx\in\mathcal{X} \} = \mathbb{R}^{+}_0$.

For analytic tools that summarize numerical data, such as calculating a sample mean, it is straightforward to specify their complete potential outcome sets with some precision. However, with other output, such as graphical or high-dimensional output, it may be more difficult to specify the complete set of potential outcomes. Nevertheless, it may be possible to describe in a general sense what output could be produced, and there is value in developing such a description. For example, when making a scatter plot of pairs of $n$ data points $(x_{11},x_{21}),\dots,(x_{1n},x_{2n})$, where $x_{1i}$, $x_{2i}$ $\in \mathbb{R}$, then the complete potential outcome set is the set of all possible combinations of $n$ points arranged on the plane $\mathbb{R}^2$.

The complete potential outcome set follows from the nature of the data and the properties of the analytical tool. Therefore, this set is common to any analysis meeting the same specifications. An assumption that we make is that most analysts would agree on the characteristics of the complete potential outcome set, conditional on the data and the chosen analytic tool. Thus, the complete potential outcome set does not yet incorporate an analyst's unique perspective.

\subsection{Probability Structure for Outcome Sets}
\label{sec:probability}

In order to further model the analytic process, we need to introduce a probability structure for modeling potential outcome sets. This probability structure will allow us to introduce and model two additional outcome sets below, namely the expected and anomaly potential outcome sets. The probabilities here characterize the likelihood of the output $\mathbf{Y}$ from the analytic tool $T$ on the data $\bX$ falling in a given potential outcome set. Because the true data generating mechanism is presumably not known, the analyst will have to estimate the probability that the outcome of an analytic tool falls anywhere in a potential outcome set. To distinguish between the analyst's estimate of the data generating mechanism and the true mechanism, we denote $\Phat(A)$ to indicate the analyst's estimate rather than $\prob(A)$ for some event $A$, as described in Section~\ref{sec:generationmechansims}. 

The randomness that gives rise to the probability structure that we propose has two primary sources. The first is the traditional statistical variation that would cause two different sampled datasets to be different in their values. The second source of variation arises from ``upstream uncertainty", which is the series of activities that occurs between data collection and the generation of data analytic output. For example, in some contexts, batch effects can be introduced in data collection that can affect inferences in downstream analyses~\citep{leek2010tackling}. Similarly, data recording errors can produce outliers in certain variables. In both cases, the analyst may not initially be aware of these features of the data. If the analyst is not able to completely characterize these upstream processes, then it may be reasonable to consider this aspect of the data generation as random. 

Because of these various sources of randomness, the analyst's estimated probability $\Phat$ could be constructed using a variety of mechanisms. The analyst could use a formal probability model with hypothesized parameter values or prior distributions. However, in other less structured situations, it may not be feasible to employ a formal model and the analyst may rely on more coarse specifications of the probability distribution. For example, in estimating the mean of a random variable, one analyst may feel confident in using a fully specified Normal distribution to predict the value of the sample mean while another analyst (perhaps less experienced with this type of data) may only feel comfortable specifying the probability of the sample mean falling in broadly defined sets.

Once the analytic step has been taken and the method is applied to the data, we observe the result and there is nothing random. Hence, all discussions of probability in this section are to be considered in the context prior to applying a method to the data. However, the probabilities assigned to the potential outcomes will play an important role in determining what information is gained upon observing the result~(Section~\ref{sec:analyticstep}).

\subsection{Complete Potential Outcome Set} 

Next, we describe three assumptions about complete potential outcome sets. First, a basic, but also an important assumption that we make regarding the data $\bX$, the tool $T$, and the output $\boldy$, is that the complete potential outcome set $\cpos$ contains all of the outcomes that we could observe. 
\begin{assumption}[Complete potential outcome set contains all outcomes]
Suppose the analyst has chosen $T$ to apply to the data $\bX$ and let $\boldy$ represent the output from that method (as yet unobserved). Let $\cpos$ represent the complete potential outcome set for $\boldy$ and let $\{\boldy\in\cpos\}$ represent the event that the output $\boldy$ is observed to fall in $\cpos$. Let $\prob$ represent the true, but unknown, probability distribution governing the output $\boldy$. Then, $\prob(\boldy\in\cpos)=1$.     
\label{assumption-cpos1}
\end{assumption}

The second assumption we make states that although an analyst cannot know $\prob$, the probability distribution governing the output $\boldy$, the analyst's best estimate of $\prob$, i.e. $\Phat$, satisfies the same assumption.    
\begin{assumption}[Complete potential outcome set contains all of analyst's outcomes]
Let $\Phat$ be an analyst's best estimate of the true probability $\prob$, which is unknown in general. Then, we also assume that $\Phat(\boldy\in\cpos)=1$.  
\label{assumption-phatcpos}
\end{assumption}

Third, we note that $\prob$ and the complete potential outcome set $\cpos$ are universal in the sense that they are independent of the analyst looking at the data. However, $\Phat$ is specific to an analyst and will likely differ from analyst to analyst. 

\begin{assumption}[Universality of the complete potential outcome set]
Suppose two analysts labeled A and B look at the same dataset $\bX$ independently of each other and apply the same tool $T$ to produce output $\boldy$. Denote $\Phat_A$ and $\Phat_B$ as the estimates of $\prob$ for analysts A, and B, respectively. Then, we assume that $\Phat_A(\boldy\in\cpos) = \Phat_B(\boldy\in\cpos)=1$.     
\end{assumption}

While it is likely that $\Phat_A\ne\Phat_B$, so that two different analysts might disagree on how the probability mass of the output is distributed across $\cpos$, they should still agree on the overall structure of $\cpos$ and the boundaries of where $\boldy$ might fall, regardless of how likely or unlikely. In what follows, we will use $\Phat$ to denote an arbitrary analyst's estimate of the probability distribution.

Continuing the example from the previous section with non-negative integer data $\bX=(X_1,\dots,X_n)$, if our analytic method $T$ were the sample mean such that $\boldy=T(\bX)$, then our complete potential outcome set is $\cpos = \mathbb{R}_0^+$. Under this setup, it would be reasonable to assume that $\prob(\boldy\in\cpos)=1$ and that an analyst's best estimate $\Phat$ would also satisfy $\Phat(\boldy\in\cpos)=1$.

Next, we partition the complete potential outcome set $\cpos$ into two subsets, namely the expected potential outcome set (Section~\ref{sec:expectedoutcome}) and the anomaly potential outcome set (Section~\ref{sec:anomalyoutcome}).

\subsection{Expected Potential Outcome Set}
\label{sec:expectedoutcome}

In this section, we introduce the \textbf{expected potential outcome set}, which is a subset of the complete potential outcome set and is unique to the data analyst. In our analytic iteration model, before applying $T$ to the data, the analyst often makes a (explicit or implicit) prediction as to what the output $\boldy=T(\bX)$ is expected to be, or the range that the output will fall in. Such predictions can range from being vaguely characterized to being very well specified, perhaps based on some sort of previously developed hypothesis or preliminary data. 

The expected potential outcome set $\epos\subset\mathcal{Y}$ is the set, according to the best judgment of the analyst, in which the output $\boldy=T(\bX)$ is expected to fall, which is a subset of the complete potential outcome set $\cpos$. Specifying the expected potential outcome set is important for interpreting the output from the analytic step and for deciding on any next steps in the analytic iteration. In particular, the expected potential outcome set indirectly specifies which outcomes are \textit{unexpected}, which we will discuss in the next section. 

\begin{definition}[Expected potential outcome set]
    Let $\boldy=T(\bX)$ be the output of applying an analytic tool $T$ to the data $\bX$ and let $\epos$ be a subset of the complete potential outcome set $\cpos$. $\epos$~is an \textbf{expected potential outcome set} for $\boldy$ if it satisfies:
    \begin{enumerate}
        \item $\Phat(\boldy\in\epos) > 0.5$; and
        \item $\Phat(\boldy\in\epos) < 1$,
    \end{enumerate}
    where $\Phat$ is the analyst's best estimate of the true probability structure $\prob$.
    \label{def:exp-pot-outcome}
\end{definition}

A defining characteristic of $\epos$ is that the analyst believes that the outcome is expected to fall in that set, by which we mean that $\boldy$ is ``more likely than not" to fall in $\epos$, in the best judgment of the analyst. 

The expected potential outcome set depends on many factors, including the analyst's current state of knowledge of the science, assumptions about the underlying data generation process, the ability to model the data using statistical or computational tools, knowledge of the theoretical properties of those tools, and the understanding of the uncertainty or variability of the observed data across multiple samples. Hence, even if the underlying truth is thought of as fixed, it is reasonable to assume that different analysts might develop different sets of expected potential outcomes, reflecting differing levels of familiarity with the various factors involved and different weightings of existing evidence or data.

\subsubsection{Example: Reading in a data file}
\label{sec:read-data-file}

Consider a scenario where an analyst receives a data file and is told by the project scientist that the data file contains 1,000 observations. The analyst's first step might be to read in the data and verify that the resulting data object is a table with 1,000 rows. Here, the tool $T$ might be a function that counts the number of rows or perhaps a printed summary of the number of rows provided by the software. The ``data" $\bX$ in this case is the dataset file being read in, and the output $\boldy$ is the number of rows. The complete potential outcome set $\cpos$ is the set of non-negative integers and the expected potential outcome set might be $\epos=\{1000\}$, i.e. the set containing a single integer. Finally, the analyst might estimate that $\Phat(\boldy\in\epos)=0.99$ based on the fact that the project scientist confirmed beforehand that there were 1,000 rows.

\subsection{Anomaly Potential Outcome Set}
\label{sec:anomalyoutcome}

The \textbf{anomaly potential outcome set} contains potential outcomes for $\boldy$ that are \textit{unexpected} by the analyst for data $\bX$ using analytic tool $T$. The outcomes characterized by the anomaly potential outcome set are not impossible, but are deemed unlikely to occur in the eyes of the analyst. The anomaly potential outcome set will typically not be specified directly, but rather indirectly via the specification of the complete potential outcome set and the expected potential outcome set. 

\begin{definition}[Anomaly potential outcome set]
    Given a complete potential outcome set $\cpos$ and expected potential outcome set $\epos\subset\cpos$, the \textbf{anomaly potential outcome set} is $\apos=\cpos\setminus\epos$.
    \label{def:anom-pot-outcome}
\end{definition}

Using the definitions of $\cpos$ and $\epos$, it follows that for an analytic output $\boldy$, $\Phat(\boldy\in\apos) > 0$ and $\Phat(\boldy\in\apos) = 1-\Phat(\boldy\in\epos) < 0.5$. As $\apos$ is the complement of $\epos$ in $\cpos$, the interpretation of $\apos$ is that the output $\boldy$ may fall in $\apos$, but is ``less likely than not" to do so.

Continuing the data file example from the previous section, because $\epos=\{1000\}$ and $\cpos=\mathbb{R}^+_0$, then the anomaly set is $\apos=\{0,1,\dots,999,1001,1002,\dots\}$. Given that the analyst specified $\Phat(\boldy\in\epos)=0.99$, we have that $\Phat(\boldy\in\apos)=0.01$. Hence, from the perspective of the analyst, it is highly unlikely that the number of rows presented in the output will be anything other than 1,000.

\subsection{From Outcome Sets to Information}

Dividing the complete potential outcome set into the complementary expected and anomaly potential outcome sets is a task that separates data analysis problems from traditional statistical problems. In particular, this division must be done by a person, namely the data analyst, and every analyst will likely devise a different configuration. The expected and anomaly potential outcome sets are dependent on the analyst's judgment and are not inherent aspects of the science or the data. Therefore, as the analyst's information about the data and the data generating mechanism accumulates, it is possible that the expected potential outcome sets for various outputs will evolve.

Developing the different potential outcome sets in an analytic step, as well as their probability structure, allows us to characterize what we see as the goal of data analysis, which is to gain information. Information cannot be obtained without risk, and in our model, information is gained from resolving the uncertainty about the output $\boldy$ obtained from applying $T$ to $\bX$. By observing the result of applying an analytic tool $T$ and seeing whether it falls in $\epos$ or $\apos$, this leads to a formal framework for how to determine what is learned in a step of an analytic iteration.

\section{Information Gain From A Single Analytic Step}
\label{sec:analyticstep}

Upon observing $\boldy$, the output from a single analytic step, we can determine whether one of the mutually exclusive events $\{\bY\in\epos\}$ or $\{\bY\in\apos\}$ has occurred. 
We first define the information that is gained from observing the output from an analytic step. We then define information-based criteria that can be used to characterize the available tools based on an analyst's goals or current knowledge of the data.

\subsection{Information Gain}
\label{sec:informationgain}

For a given tool $T$, we describe below how the information gained from applying $T$ to the data $\bX$ is determined by the expected and anomaly potential outcome sets $\epos$ and $\apos$ constructed by the analyst and $\Phat$, the analyst's best estimate of the probability distribution for the output $\boldy$. Because $\epos$ is defined by the analyst using $\Phat$, the information gained from observing the potential outcome is similarly defined in terms of~$\Phat$. First, we introduce the quantitative characterization of information in this context, which is equivalent to Shannon information \citep{shannon}. Then, we introduce properties of information in this context. 

\begin{definition}[Information gain from an analytic step]
    Let $\bX$ be a dataset and $T$ be any analytic tool that can be applied to $\bX$ to produce output $\boldy=T(\bX)$. Let $S$ be any subset of the complete potential outcome set $\cpos$. Then, the \textbf{information gained} by the analyst from observing the event $\{\boldy\in S\}$ in a single analytic iteration step is $-\log\Phat(\boldy\in S)$.
    \label{def:infogain}
\end{definition}

The information structure of an analytic step is introduced by the analyst via the specification of~$\epos$ and~$\apos$ within~$\cpos$. It follows from Section~\ref{sec:probability} Assumption~\ref{assumption-cpos1} (namely complete potential outcome set $\cpos$ contains all outcomes) and the definitions of $\epos$ and $\apos$ in Sections~\ref{sec:expectedoutcome}-\ref{sec:anomalyoutcome} that the outcome $\boldy$ must fall into one of those two sets. If we observe the event $\{\boldy\in\epos\}$, we say the result is as-expected; if we observe the event $\{\boldy\in\apos\}$, we say the result is unexpected. Note that without the specification of $\epos$ and $\apos$, there is no possibility of information gain. 

\begin{theorem}[No information gain without specifying expected potential outcome set]
Without any specification of the expected potential outcome set $\epos$ (and the anomaly potential outcome set $\apos$), the information gained from an analytic step is $0$.
\end{theorem}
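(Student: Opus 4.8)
The plan is to observe that the statement reduces to a direct application of the information-gain definition (Definition~\ref{def:infogain}) together with Assumption~\ref{assumption-phatcpos}, once we pin down what ``no specification of $\epos$'' means operationally. The key interpretive step is to recognize that if the analyst declines to carve $\cpos$ into an expected set $\epos$ and its complement $\apos=\cpos\setminus\epos$, then the only subset $S$ of $\cpos$ available to describe the outcome is $\cpos$ itself. In other words, the analyst can only register the single event $\{\boldy\in\cpos\}$, namely that the output lands somewhere in the complete potential outcome set.

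First I would make this reduction explicit: absent any partition, there is no nontrivial dichotomy $\{\boldy\in\epos\}$ versus $\{\boldy\in\apos\}$ whose resolution could convey information, so the only event whose occurrence can be observed is membership in $\cpos$. Next I would apply Definition~\ref{def:infogain} with $S=\cpos$, which gives that the information gained equals $-\log\Phat(\boldy\in\cpos)$. Then I would invoke Assumption~\ref{assumption-phatcpos}, which guarantees $\Phat(\boldy\in\cpos)=1$, so the information gained is $-\log 1 = 0$.

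The argument involves no real computation; its substance is conceptual rather than technical. The main (and essentially only) obstacle is to justify the interpretive claim that ``not specifying $\epos$'' leaves $\cpos$ as the sole observable set, rather than leaving the information undefined. I would argue this by appealing to Assumption~\ref{assumption-cpos1}, which already forces $\{\boldy\in\cpos\}$ to be a sure event: a certain outcome carries no surprise and hence, under the Shannon interpretation, zero self-information. This both establishes the result and clarifies why a nontrivial split of $\cpos$ is a prerequisite for any information to be gained in an analytic step.
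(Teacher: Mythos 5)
Your proposal is correct and follows essentially the same route as the paper's own proof: with no $\epos$ specified, the only observable event is $\{\boldy\in\cpos\}$, so Definition~\ref{def:infogain} together with Assumption~\ref{assumption-phatcpos} gives $-\log\Phat(\boldy\in\cpos)=-\log 1=0$. Your additional remark grounding the interpretive step in Assumption~\ref{assumption-cpos1} is a small elaboration, not a different argument.
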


\begin{proof}
If all that is specified is that $\boldy$ can fall in the complete potential outcome set $\cpos$, then it follows from Assumption~\ref{assumption-phatcpos} (namely, $\Phat(\boldy\in\cpos)=1$) and Definition~\ref{def:infogain} that the information gained from that outcome is $-\log\Phat(\boldy\in\cpos)=-\log 1 =0$. 
\end{proof}

By Definition~\ref{def:infogain}, the information gained by the analyst upon observing $\{\boldy\in\epos\}$ or $\{\boldy\in\apos\}$ is then $-\log\Phat(\boldy\in\epos)$ or $-\log\Phat(\boldy\in\apos)$, respectively.  The next property shows that no matter what potential outcome set the output~$\boldy$ ultimately falls in, there will be some non-zero information gain.

\begin{theorem}[Information positivity]
Upon specification of the expected potential outcome set $\epos$ (and the anomaly potential outcome set $\apos$), the information gained from an analytic step is strictly positive and finite.
\end{theorem}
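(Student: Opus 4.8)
The plan is to reduce the statement to the two cases that can actually occur and then to bound $-\log$ of the relevant probability in each. By Assumption~\ref{assumption-cpos1} the output $\boldy$ lands in $\cpos$ with probability one, and since $\epos$ and $\apos=\cpos\setminus\epos$ partition $\cpos$ (Definition~\ref{def:anom-pot-outcome}), the observed event is exactly one of $\{\boldy\in\epos\}$ or $\{\boldy\in\apos\}$. Hence, by Definition~\ref{def:infogain}, the information gained equals either $-\log\Phat(\boldy\in\epos)$ or $-\log\Phat(\boldy\in\apos)$, and it suffices to show each of these two quantities is strictly positive and finite.

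First I would record the probability bounds that the definitions supply. Definition~\ref{def:exp-pot-outcome} gives $0.5 < \Phat(\boldy\in\epos) < 1$ directly. For the anomaly set, the remark following Definition~\ref{def:anom-pot-outcome} gives $\Phat(\boldy\in\apos)=1-\Phat(\boldy\in\epos)$, so the same definition yields $0 < \Phat(\boldy\in\apos) < 0.5$. The crucial observations are that each probability is strictly less than $1$ and strictly greater than $0$: the upper bound $<1$ comes from $\Phat(\boldy\in\epos)<1$ (which also forces $\Phat(\boldy\in\apos)>0$), while the lower bound $>0$ comes from $\Phat(\boldy\in\epos)>0.5$ in the expected case and from $\Phat(\boldy\in\apos)>0$ in the anomaly case. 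I would then finish with the elementary behavior of the logarithm: on the interval $(0,1)$ the map $p\mapsto-\log p$ is strictly positive, and for any $p>0$ it is finite. Since both $\Phat(\boldy\in\epos)$ and $\Phat(\boldy\in\apos)$ lie strictly inside $(0,1)$, both pieces of information gain are strictly positive and finite; concretely the expected-case gain lies in $(0,\log 2)$ and the anomaly-case gain lies in $(\log 2,\infty)$, although only strict positivity and finiteness are required.

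The computation is routine, so the one point worth flagging is where finiteness actually comes from. The only way the information gain could fail to be finite is if some observable event were assigned probability zero, giving $-\log 0=+\infty$. The second condition of Definition~\ref{def:exp-pot-outcome}, $\Phat(\boldy\in\epos)<1$, is precisely what rules this out: without it, an analyst could place all mass on $\epos$, set $\Phat(\boldy\in\apos)=0$, and then observing an anomaly would yield infinite information. Thus the two-sided strict bounds of Definition~\ref{def:exp-pot-outcome} are exactly what the theorem requires, and the ``hard part'' is conceptual rather than computational---namely recognizing that positivity and finiteness are each guaranteed by a different one of those two defining inequalities, and that the anomaly case (rather than the expected case) is the one that could in principle blow up.
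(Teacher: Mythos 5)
Your proof is correct and follows essentially the same route as the paper's: extract the strict bounds $0 < \Phat(\boldy\in\epos) < 1$ and $0 < \Phat(\boldy\in\apos) < 1$ from Definitions~\ref{def:exp-pot-outcome} and~\ref{def:anom-pot-outcome}, then apply the elementary fact that $-\log p$ is strictly positive and finite for $p$ strictly inside $(0,1)$. Your additional remarks---the explicit case reduction via Assumption~\ref{assumption-cpos1}, the observation that finiteness hinges on $\Phat(\boldy\in\epos)<1$ preventing $\Phat(\boldy\in\apos)=0$, and the sharper intervals $(0,\log 2)$ and $(\log 2,\infty)$---are sound elaborations of the same argument rather than a different approach.
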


\begin{proof}
By Definitions~\ref{def:exp-pot-outcome} and \ref{def:anom-pot-outcome}, we know that $0 <\Phat(\boldy\in\epos) < 1$~ and $0 <\Phat(\boldy\in\apos) < 1$, respectively. Therefore, it follows that $-\log\Phat(\boldy\in\epos) > 0$ and $-\log\Phat(\boldy\in\apos) > 0$. 
\end{proof}

The next property explains how two different analysts applying the same tool to the same dataset can have different information gain upon observing the same output.

\begin{theorem}[Different information gain for multiple analysts]
Suppose we have two analysts, analyst $A$ and analyst $B$, and each has chosen to apply tool $T$ to the data~$\bX$~(with complete potential outcome set $\cpos$). Let $\epos_A$ and $\epos_B$ be the expected potential outcome sets in $\cpos$ for analysts $A$ and $B$, respectively, and let $\Phat_A$ and $\Phat_B$ be their corresponding probability densities for the output~$\boldy$. If $\Phat_A(\boldy\in\epos_A)\ne\Phat_B(\boldy\in\epos_B)$,
then the two analysts will gain different amounts of information upon observing the output~$\bY$.  
\label{theorem:different-info}
\end{theorem}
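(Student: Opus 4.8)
The plan is to reduce the statement to the strict monotonicity of the information functional. By Definition~\ref{def:infogain}, once the output $\boldy$ is observed to lie in a set $S$, the information gained is $-\log\Phat(\boldy\in S)$, so each analyst's information gain is the composition of the fixed map $p\mapsto-\log p$ with the probability that analyst assigns to the set containing the observed outcome. The elementary fact I would lean on is that $p\mapsto-\log p$ is strictly decreasing, hence injective, on $(0,1)$. Definitions~\ref{def:exp-pot-outcome} and~\ref{def:anom-pot-outcome} guarantee $0<\Phat(\boldy\in\epos)<1$ and $0<\Phat(\boldy\in\apos)<1$, so the logarithm is well defined for both analysts and the injectivity applies throughout.

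First I would fix the realized output $\boldy$ and classify it for each analyst. Since $\epos_A$ and $\apos_A$ partition $\cpos$ (Definition~\ref{def:anom-pot-outcome}) and $\Phat_A(\boldy\in\cpos)=1$ (Assumption~\ref{assumption-phatcpos}), analyst $A$ records exactly one of the events $\{\boldy\in\epos_A\}$ or $\{\boldy\in\apos_A\}$ and gains either $-\log\Phat_A(\boldy\in\epos_A)$ or $-\log(1-\Phat_A(\boldy\in\epos_A))$ accordingly; the same dichotomy holds for $B$. In the principal case, where $\boldy$ is as-expected for both analysts (or anomalous for both), the two gains are $-\log\Phat_A(\boldy\in\epos_A)$ and $-\log\Phat_B(\boldy\in\epos_B)$ (respectively their $1-(\cdot)$ analogues). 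The hypothesis $\Phat_A(\boldy\in\epos_A)\ne\Phat_B(\boldy\in\epos_B)$, combined with injectivity of $-\log$, then immediately forces the two gains to differ, closing this case.

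The step I expect to be the main obstacle is the mixed case, in which the observed $\boldy$ is as-expected for one analyst but anomalous for the other, say $\boldy\in\epos_A$ while $\boldy\in\apos_B$. Here the gains are $-\log\Phat_A(\boldy\in\epos_A)$ and $-\log(1-\Phat_B(\boldy\in\epos_B))$, and the hypothesis on the expected-set probabilities does not by itself rule out the coincidence $\Phat_A(\boldy\in\epos_A)=1-\Phat_B(\boldy\in\epos_B)$. To resolve this I would either (i) read the theorem as comparing the information each analyst attaches to a commonly recorded outcome type, restricting to the matching-classification scenario in which the statement is exactly the injectivity argument of the previous paragraph, or (ii) strengthen the hypothesis so that the full pair of assigned probabilities across analysts differs, not merely the expected-set probabilities. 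I would state explicitly which reading the proof adopts, since the gap in the mixed case is genuine rather than a routine omission.

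Overall, the mathematical content is light: the theorem is essentially a statement that information gain is a deterministic, injective function of the analyst-assigned probability, so distinct assigned probabilities yield distinct information. The only real care needed is in the bookkeeping of which set the shared output $\boldy$ lands in for each analyst, and in being precise about the mixed-classification case flagged above.
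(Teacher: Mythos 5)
Your handling of the matched-classification cases (both analysts as-expected, or both anomalous) is correct and coincides with the paper's argument: there the hypothesis $\Phat_A(\boldy\in\epos_A)\ne\Phat_B(\boldy\in\epos_B)$ transfers (via complements in the both-anomalous case) to the probabilities of the observed events, and injectivity of $-\log$ on $(0,1)$ finishes the job.

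However, your claim that the mixed case contains a ``genuine gap'' in the theorem is incorrect, and both of the repairs you propose (restricting the theorem to matching classifications, or strengthening the hypothesis) are unnecessary and would weaken the result. The coincidence you worry about, $\Phat_A(\boldy\in\epos_A)=1-\Phat_B(\boldy\in\epos_B)$, is impossible under the paper's definitions: Definition~\ref{def:exp-pot-outcome} requires not merely $0<\Phat(\boldy\in\epos)<1$ but specifically $\Phat(\boldy\in\epos)>0.5$, and consequently $\Phat(\boldy\in\apos)<0.5$ by Definition~\ref{def:anom-pot-outcome}. So if $\boldy\in\epos_A$ while $\boldy\in\apos_B$, analyst $A$'s probability for the observed event satisfies $\Phat_A(\boldy\in\epos_A)>0.5$ while analyst $B$'s satisfies $\Phat_B(\boldy\in\apos_B)=1-\Phat_B(\boldy\in\epos_B)<0.5$; the two probabilities lie on opposite sides of $0.5$, hence are unequal, and injectivity of $-\log$ yields different information gains --- in this case without any appeal to the theorem's hypothesis at all. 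This is exactly how the paper's proof disposes of its Scenarios 1 and 2. You quoted Definition~\ref{def:exp-pot-outcome} only for the bound $0<\Phat(\boldy\in\epos)<1$ and overlooked its stronger ``more likely than not'' condition; invoking it closes your mixed case and completes the proof with no modification to the statement.
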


\begin{proof}
    See Supplementary Note~\ref{sec:suppnote1}
\end{proof}

\begin{corollary}
    If two analysts $A$ and $B$ experience the same information gain from observing the output $\bY$, then it follows that $\Phat_A(\boldy\in\epos_A)=\Phat_B(\boldy\in\epos_B)$ and that the observed outcome was either as-expected for both analysts or unexpected for both analysts.
\label{corollary:different-info}
\end{corollary}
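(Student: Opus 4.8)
The plan is to exploit the numerical threshold that the strict inequalities in Definitions~\ref{def:exp-pot-outcome} and~\ref{def:anom-pot-outcome} impose on the information gain, and then read off both conclusions from it. Upon observing $\bY$, each analyst's outcome is either as-expected (the event $\{\bY\in\epos\}$) or unexpected (the event $\{\bY\in\apos\}$); these are mutually exclusive and exhaustive by Assumption~\ref{assumption-cpos1}. By Definition~\ref{def:infogain}, the corresponding information gain is $-\log\Phat(\boldy\in\epos)$ or $-\log\Phat(\boldy\in\apos)$, respectively.

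First I would establish a separation at $\log 2$. Since Definition~\ref{def:exp-pot-outcome} gives $\Phat(\boldy\in\epos)>0.5$ and the remark following Definition~\ref{def:anom-pot-outcome} gives $\Phat(\boldy\in\apos)<0.5$, applying the strictly decreasing map $p\mapsto-\log p$ yields
\[
-\log\Phat(\boldy\in\epos) \;<\; \log 2 \;<\; -\log\Phat(\boldy\in\apos).
\]
Hence, for either analyst, the information gain falls strictly below $\log 2$ precisely when the outcome is as-expected and strictly above $\log 2$ precisely when it is unexpected, and the value $\log 2$ is never attained. Now let $I$ denote the common information gain. Because $I\neq\log 2$, it lies strictly on one side of the threshold; since both analysts share the value $I$, they must lie on the same side, so either both outcomes are as-expected (when $I<\log 2$) or both are unexpected (when $I>\log 2$). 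This is conclusion~(b).

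To finish, in the as-expected case $-\log\Phat_A(\boldy\in\epos_A)=I=-\log\Phat_B(\boldy\in\epos_B)$, and injectivity of $-\log$ on $(0,1)$ gives $\Phat_A(\boldy\in\epos_A)=\Phat_B(\boldy\in\epos_B)$. In the unexpected case the same injectivity gives $\Phat_A(\boldy\in\apos_A)=\Phat_B(\boldy\in\apos_B)$, which I would convert via the complementarity relation $\Phat(\boldy\in\apos)=1-\Phat(\boldy\in\epos)$ into the same equality of $\epos$-probabilities, establishing~(a). (Conclusion~(a) on its own is also immediate as the contrapositive of Theorem~\ref{theorem:different-info}.) The one step requiring care is ruling out the mixed case, in which one analyst's outcome is as-expected while the other's is unexpected: a naive ``equal information gain implies equal probability by injectivity'' argument would here equate an $\epos$-probability with an $\apos$-probability, and it is exactly the strict separation at $\log 2$ that forbids this coincidence.
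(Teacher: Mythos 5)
Your proof is correct and takes essentially the same route as the paper's: the paper enumerates the four as-expected/unexpected scenarios and rules out the two mixed ones because an expected-set probability must exceed $0.5$ while an anomaly-set probability must lie below $0.5$, which is precisely your separation at $\log 2$ viewed through the strictly decreasing map $p\mapsto -\log p$, and the remaining two scenarios are settled by injectivity of $-\log$ together with complementarity, exactly as you do. The only differences are cosmetic: you state the threshold on the information scale rather than the probability scale, and you note that conclusion (a) alone is the contrapositive of Theorem~\ref{theorem:different-info}.
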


\begin{proof}
    See Supplementary Note~\ref{sec:suppnote2}
\end{proof}

Both Theorem~\ref{theorem:different-info} and Corollary~\ref{corollary:different-info} highlight the fact that two analysts observing the same output generated by the same tool applied to the same dataset can experience different information gains depending on what their expectations were for the output and how their respective probability models differed for~$\boldy$.

\subsection{Expected Information Gain}
\label{sec:expected-info-gain}

When evaluating different analytic tools to apply to the data, one consideration is the amount of information that we expect to obtain from applying the tool to the data regardless of whether the outcome is as-expected or unexpected. First, we can define the observed information gained from applying a tool $T$ to the data $\bX$ in a single analytic step. 

\begin{definition}[Observed information gain from an analytic step]
    Let $\bX$ be a dataset and $T$ any analytic tool to be applied to $\bX$ to produce output $\boldy=T(\bX)$. Then the \textbf{observed information gained} by the analyst after observing $\boldy$ is
$$
G = 
\mathbf{1}\{\boldy\in\epos\}\left[-\log\Phat(\{\boldy\in\epos\})\right]
+
\mathbf{1}\{\boldy\in\apos\}\left[-\log\Phat(\{\boldy\in\apos\})\right].
$$
\label{def:observed-infogain}
\end{definition}

The value of $G$ cannot be computed until \textit{after} $T$ has been applied to $\bX$ and the output $\boldy$ has been observed, so it is not necessarily useful for planning purposes. However, we can compute its expectation with respect to $\Phat$, which is the average amount of information gained from applying a tool with a given expected potential outcome set.

\begin{definition}[Expected information gain from an analytic step]
Let $\epos$ and $\apos$ be the expected and anomaly potential outcome sets, respectively, associated with a tool $T$ and dataset $\bX$. Let $\boldy$ be the output obtained from applying $T$ to $\bX$. The \textbf{expected information gained} by the analyst from observing the output $\boldy$ is
\begin{eqnarray*}
\widehat{H} 
& = &
\mathbb{E}_{\Phat}[G]\\
& = &
\mathbb{E}_{\Phat}[\mathbf{1}\{\boldy\in\epos\}]\left[-\log\Phat(\boldy\in\epos)\right]
+
\mathbb{E}_{\Phat}[\mathbf{1}\{\boldy\in\apos\}]\left[-\log\Phat(\boldy\in\apos)\right]\\
& = &
\Phat(\boldy\in\epos)\left[-\log\Phat(\boldy\in\epos)\right]
+
\Phat(\boldy\in\apos)\left[-\log\Phat(\boldy\in\apos)\right].
\end{eqnarray*}
\label{def:expected-infogain}
\end{definition}

We note that the expectation is taken with respect to $\Phat$, which is the analyst's estimate of the probability distribution for the output, not the true distribution. Therefore, it is possible that the expected information gain computed here will diverge from what might have been computed under the true $\prob$. We will discuss this divergence further in Section~\ref{sec:analytic-progress}.

\subsection{Anomaly Information Gain}

The expected information gain $\widehat{H}$ characterizes the average information that would be obtained in a step regardless of whether the outcome is as-expected or unexpected. However, it might also be worthwhile to consider an alternate criterion that focuses on the information obtained solely when an unexpected outcome is observed.

\begin{definition}[Anomaly information gain from an analytic step]
Let $\epos$ and $\apos$ be the expected and anomaly potential outcome sets, respectively, associated with a tool $T$ and dataset $\bX$. Let $\boldy$ be the output obtained from applying $T$ to $\bX$. The anomaly information gained by the analyst from observing output $\boldy$ is
$$
\widehat{M}
= -\log\Phat(\boldy\in\apos).
$$
\label{def:anomaly-infogain}
\end{definition}

It is useful to note that because we define $\apos$ such that $\Phat(\boldy\in\apos)> 0$~(Definition~\ref{def:anom-pot-outcome}), $\widehat{M}$ is always finite. 

\begin{theorem}
$\widehat{M}$ represents the maximum possible information gain that can be obtained from applying the tool $T$ to the data $\bX$.
\end{theorem}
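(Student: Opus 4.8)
The plan is to reduce the statement to a one-line comparison between the two values that the realized information gain can actually take. By Definition~\ref{def:observed-infogain}, the observed information gain $G$ equals $-\log\Phat(\boldy\in\epos)$ when the as-expected event $\{\boldy\in\epos\}$ occurs, and equals $-\log\Phat(\boldy\in\apos)=\widehat{M}$ when the unexpected event $\{\boldy\in\apos\}$ occurs. By Assumption~\ref{assumption-cpos1} together with the fact that $\epos$ and $\apos$ partition $\cpos$, exactly one of these two mutually exclusive events must occur, so the set of attainable values of the information gain is precisely $\{-\log\Phat(\boldy\in\epos),\,\widehat{M}\}$. It therefore suffices to show that $\widehat{M}$ is the larger of these two numbers.

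To establish that ordering, I would invoke the defining inequality of the expected potential outcome set. By Definition~\ref{def:exp-pot-outcome} we have $\Phat(\boldy\in\epos)>0.5$, and by the remark following Definition~\ref{def:anom-pot-outcome} this forces $\Phat(\boldy\in\apos)=1-\Phat(\boldy\in\epos)<0.5$; hence $\Phat(\boldy\in\apos)<\Phat(\boldy\in\epos)$. Since $x\mapsto-\log x$ is strictly decreasing on $(0,1)$, applying it reverses this strict inequality and yields $\widehat{M}=-\log\Phat(\boldy\in\apos)>-\log\Phat(\boldy\in\epos)$. Thus the information gained from an unexpected outcome strictly exceeds that gained from an as-expected outcome, and $\widehat{M}$ is the unique maximum of the attainable values, which is the claim.

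I do not anticipate a substantive technical obstacle here; the argument is essentially the monotonicity of the logarithm applied to the $0.5$-threshold that defines $\epos$. The one point requiring care is conceptual rather than computational: the phrase \emph{maximum possible information gain} must be read as the maximum over the two realizable outcome events $\{\boldy\in\epos\}$ and $\{\boldy\in\apos\}$ induced by the analyst's chosen partition, and not as a supremum of $-\log\Phat(\boldy\in S)$ over arbitrary subsets $S\subseteq\cpos$ (which could be driven arbitrarily large by choosing $S$ of vanishing probability). Making that scope explicit before running the monotonicity comparison is the main thing I would be careful to state, so that the result is correctly understood as a statement about what the analyst can actually observe in a single step under the given partition.
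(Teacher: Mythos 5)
Your proof is correct and uses essentially the same argument as the paper: the inequality $\Phat(\boldy\in\apos) < \Phat(\boldy\in\epos)$ forced by the $0.5$ threshold in Definition~\ref{def:exp-pot-outcome}, combined with the strict monotonicity of $-\log$ on $(0,1)$. Your added framing---reducing ``maximum possible information gain'' to a comparison of the two attainable values of $G$ and flagging that the maximum is over the two realizable events rather than over arbitrary subsets of $\cpos$---is a useful clarification of what the paper leaves implicit, but it is scaffolding around the same core step rather than a different route.
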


\begin{proof}
Because the $-\log$ function is monotonically decreasing on $(0,1)$ and Definitions~\ref{def:exp-pot-outcome} and~\ref{def:anom-pot-outcome} restrict $\epos$ so that $\Phat(\boldy\in\apos) < \Phat(\boldy\in\epos)$, it follows that $-\log\Phat(\boldy\in\apos)>-\log\Phat(\boldy\in\epos)$.
\end{proof}

\subsection{Example: Estimating the Sign of a Correlation}

A common data analytic task is to determine whether two variables are positively or negatively related to each other. For the sake of this example, we do not consider this task as an inferential one, but rather as a computational step that is perhaps part of a larger analysis. With two numerical variables $\bx_1=x_{11},\dots,x_{1n}$ and $\bx_2=x_{21},\dots,x_{2n}$, we can compute a Pearson correlation coefficient $r(\bx_1,\bx_2)$ and observe the sign of the output value. In this case, the tool we have chosen is $T= \text{sign}(r(\cdot,\cdot))$ and the output is $\boldy=\text{sign}(r(\bx_1,\bx_2))$. The complete potential outcome set for $\boldy$ here has three elements, $\cpos=\{1, 0, -1\}$.

An analyst may expect that the two variables are positively related and so the expected potential outcome set is $\epos=\{1\}$ and the anomaly potential outcome set is $\apos=\{0,-1\}$. Furthermore, the analyst may be quite confident about this problem and may assign the event $\{\boldy\in\epos\}$ a probability of $\Phat(\boldy\in\epos)=0.95$. (Note that for this example we will use $\log_2$ so that $\widehat{H}$ ranges from 0 to 1.) From the analyst's perspective, the expected information from this single analytic iteration is $\widehat{H}= 0.29$~bits. The expected information is relatively low here given the perceived high likelihood of the result being as-expected. Stated another way, if an analyst places high confidence on the event $\{\boldy\in\epos\}$, then the expected information gain is small. In contrast, the anomaly information gain in this scenario is $\widehat{M}=4.3$~bits, which is the information gained in the event that the observed sign is either $0$ or $-1$. Given the high confidence in the correlation being positive, we gain a substantial amount of information from observing an anomaly in this case.

\section{Applications}
\label{sec:applications}

The model for analytic iteration that we have developed in the previous sections aims to provide a formal framework to characterizing a broad range of data analytic steps that would be taken over the course of a data analysis. These steps include data wrangling, cleaning or processing, exploratory data analysis and visualization, and statistical modeling. The approach we have taken parallels the traditional statistical paradigm where an observed output is decomposed into a ``model" predicted value plus a ``residual" value. However in our proposed framework, we have decomposed each analytic step into expected and anomaly potential outcome sets. Next, we outline some applications of this framework to formally characterize some common data analysis activities.

\subsection{Defining One Data Analytic Iteration}
\label{sec:oneiteration}

First, we use this framework to address a question on `What is one data analytic iteration?'. We define a single data analytic iteration, based on (i) the specification of the two potential outcome sets, both of which characterize events \textit{before} the analyst applies the analytic tool to the data, and (ii) observing the output \textit{after} applying the tool to the data:

\begin{itemize}
    \item[(i)] The complete potential outcome set $\mathcal{Y}$ (\textbf{Definition~\ref{def:comp-pot-outcome}}), which describes all the potential outcomes of the analytic step for applying tool $T$ to the data $\bX$. Note, while choice of $T$ depends on the analyst, the specification of $\mathcal{Y}$ is only dependent on $T$.  
    \item[(ii)] The expected potential outcome set $\epos\subset\cpos$ (\textbf{Definition~\ref{def:exp-pot-outcome}}), which describes the set, according to the best judgment of the analyst, in which the output $\boldy=T(\bX)$ is expected to fall. Note, specification of $\epos$ is dependent on the analyst and also indirectly specifies the anomaly potential outcome set $\apos=\cpos\setminus\epos$ (\textbf{Definition~\ref{def:anom-pot-outcome}}). 
    \item[(iii)] Observing the output $\bY = T(\bx)$ for some $\bx$ and determining whether one of the mutually exclusive events $\{\bY\in\epos\}$ or $\{\bY\in\apos\}$ has occurred.
\end{itemize}

\begin{figure}[tbh]
    \centering
    \includegraphics[width=6in]{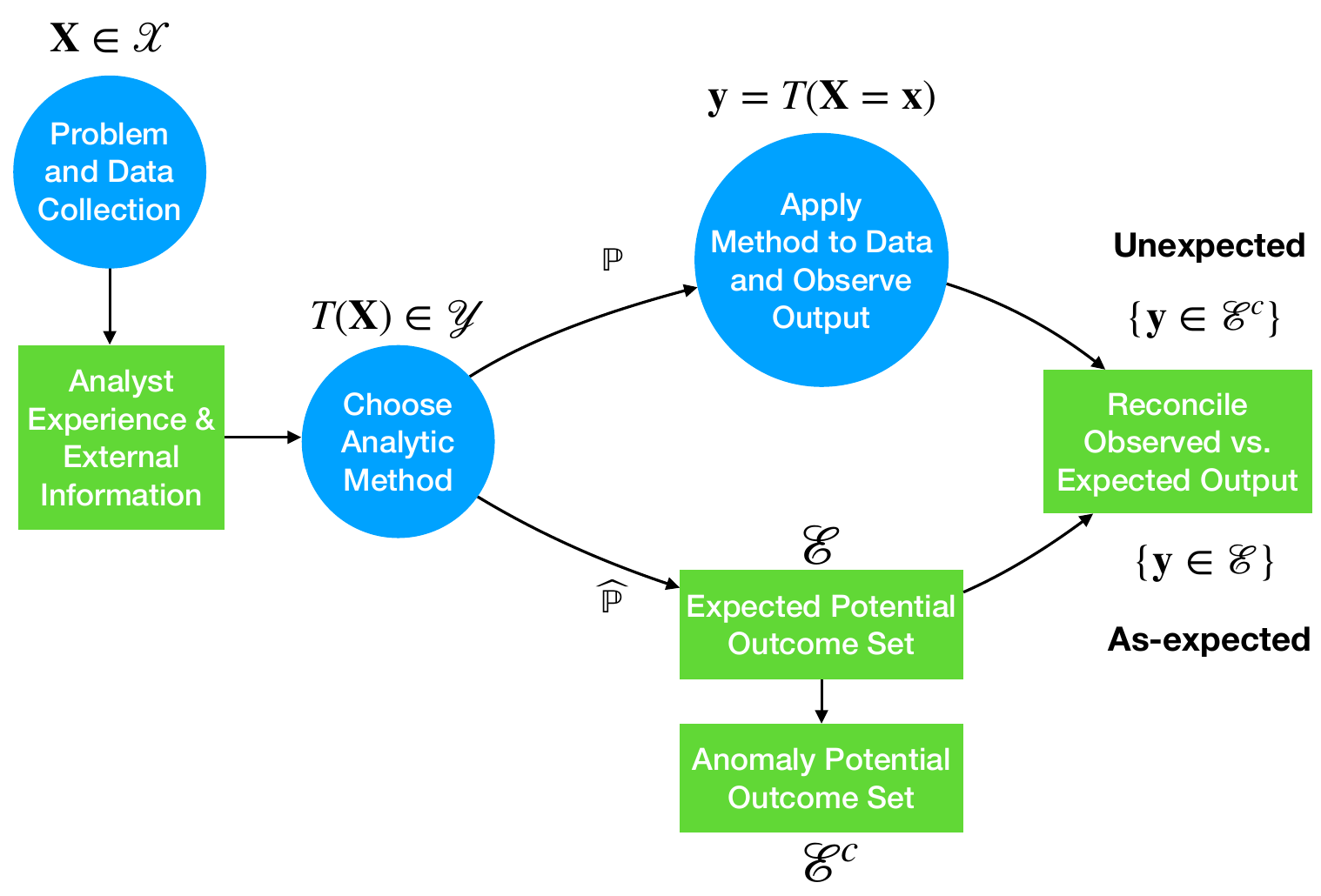}
    \caption{A single data analytic iteration involving choosing a tool $T$ with complete potential outcome set $\cpos$, applying the tool to the data $\bx$ to observe the output $\bY$, and determining whether the output $\bY$ falls in the expected potential outcome set $\epos$ or the anomaly potential outcome set $\apos$. The green rectangles highlight areas of the iteration where the analyst makes a contribution to the analytic process.}
    \label{fig:analyticiteration}
\end{figure}

\noindent The combination of these three components make define one \textit{data analytic iteration}~(see Figure~\ref{fig:analyticiteration}). The key insight here is defining the end of the data analytic iteration after observing the output $\bY$ and determining whether one of the mutually exclusive events $\{\bY\in\epos\}$ or $\{\bY\in\apos\}$ has occurred. The next step starts a decision making process about what to do next, using the observation that has just been realized. For example, the next analytic iteration begins where the analyst chooses amongst a collection of informative tools that we can apply to the data in the next iteration (\textbf{Section~\ref{sec:analyticstep}}).

\subsection{Single vs.~Multiple Iteration Analyses}
\label{sec:typesofanalyses}

Different types of data analyses may have different numbers of data analytic iterations carried out. Data from clinical trials are often analyzed using a detailed analytic protocol that is written before the data are collected. Once the data are collected, the pre-specified analysis protocol is executed without deviation. In this scenario there is essentially only a single analytic step. While that step may be complicated and statistically sophisticated, there are no intermediate results for the analyst to look at and consider, and no expectations for the analyst to propose. However, once the analysis is done and the results are in hand (or perhaps already published), the analyst may want to explore the data more to explain the observed results or to gain some further insight. This typically does not involve \textit{changing} the original analysis protocol. Rather, this next iteration of the analysis would add to what was already produced in order to provide a clearer explanation or interpretation to the results. A key distinction of this type of analysis is that there is typically a significant of preparation that is done to understand the data generation process or the data collection protocol. Therefore, uncertainties about those aspects of the data can be more tightly controlled.

Other, perhaps more exploratory, types of analyses may involve multiple iterations through the data in order to extract successive pieces of information. If the analyst is not familiar with the data collection protocol, and particularly if they were not in control of the data collection, then they analyst may need multiple iterations to learn better characterize the uncertainty in the data. Such exploratory analyses may be more common in secondary data analyses when there is a separation between those who collected the data and those who analyze it. In these situations, care must be taken to ensure that results are presented honestly and that uncertainty about any findings are accurately taken into account.

\subsection{Evaluating Analytic Progress}
\label{sec:analytic-progress}

In Section~\ref{sec:expected-info-gain} we described two numerical quantities, the observed information gain from an analytic step $G$, and the expected information gain $\widehat{H}$ (from the analyst's perspective). The value of $\widehat{H}$ is entirely determined by the analyst's estimate of the data generation mechanism, while the value of $G$ is determined in part by the true underlying data generation mechanism. Therefore, it may be of interest to compare the observed $G$ after a sequence of analytic steps to the analyst's stated value of $\widehat{H}$, particularly if the analyst is using $\widehat{H}$ to guide the choice of tools. If the analyst were to make their specification of $\epos$ and $\apos$ known, one could compare the observed information gain $G$ to the expected information gain $\widehat{H}$ to assess the reasonableness of the analyst's perspective. If $\Phat\approx\prob$, so that the analyst has a reasonable estimate of the true probability distribution governing $\boldy$, then $G$ and $\widehat{H}$ should be close on average. For a single iteration, any deviation between $G$ and $\widehat{H}$ might be difficult to interpret. However, with many steps executed over the course of an analysis, a pattern may emerge of large or small deviations.

Consider a sequence of analytic steps $t=1, 2, \dots$ where at each step the analyst must specify a value for $\widehat{H}_t$ based on their current understanding of the data generating mechanism and their choice of analytic tool. Upon executing the analytic step we observe $G_t$. Over the course of the sequence of analytic steps, we can compute the cumulative expected information $S_{\widehat{H}}=\sum_t\widehat{H}_t$ and the cumulative observed information gain $S_G=\sum_t G_t$. If the analyst has a good understanding of the data generating mechanism, it might be expected that $S_G$ and $S_{\widehat{H}}$ would be close.

Consider a scenario where an analyst is highly confident of the outcome and predicts it with 95\% probability. Should an anomaly be observed, so that $\boldy$ is observed to fall in the anomaly potential outcome set $\apos$, then this single observation might be attributed to random chance. However, over the course many analytic steps, if anomalies were to continue to occur, despite the analyst's confidence, then we would expect that $S_G$ would deviate substantially from $S_{\widehat{H}}$. In this scenario, we could potentially use the deviation between $S_G$ and $S_{\widehat{H}}$ as a measure of how far off is the analyst's understanding of the data.

\section{Discussion}
\label{sec:discussion}

Substantial previous work has described data visualizations as a tool in exploratory data analyses that could be used to guide data analysts on how to separate their expectations from what they observe in the practice of data analysis \citep{tukey1977exploratory, anscombe1973graphs}. In this paper, we extend the basic idea of comparing expectations to observations in data visualizations to more general analytic situations. Specifically, we propose a model for the iterative process of data analysis based on the analyst's expectations using what we refer to as \textit{probabilistic outcome sets} and statistical information gain. Each analytic step induces a set of potential outcomes which is then partitioned by the analyst into an expected potential outcome set and an anomaly potential outcome set. These sets are then assigned probabilities which can be used to quantify the information and evidence that can be obtained from the analytic step.

It is important to note that the concept of developing expectations and comparing them to observed data is fundamental to all of statistics and has previously been applied in the context of data analysis~\citep{grolemund2014cognitive}. Statistical models themselves are formal representations of an analyst's or scientist's expectations for the data. However, the revolution in data science in recent years has shined a light on the less formal, yet critically important, aspects of data analysis that have yet to admit a mathematical model. In fact, many have argued that data analysis is a skill that can be learned only through experience~\citep{greenhouse2018teaching}. If so, it should be unsatisfying to the statistical community that there is an important part of our field that can only be learned by experiencing it. It is only logical then to ask what is it that we learn through the experience of doing data analysis and what can we generalize across our accumulated experiences? 

Here, we argue that analytic iteration, the process of developing expectations, reconciling output with our expectations, and determining next steps, is a generalizable aspect of data analysis that deserves more attention. We propose a model for this cyclic process in order to provide some formal structure to the analytic process without being prescriptive. Furthermore, we introduce some language for describing this process when teaching it to others. This is important and relevant because as the popularity and importance of data science grows in all areas of society, the role of data analysis will be highlighted. The benefit of developing a formal characterization of the analytic process is a greater ability to scale training as well as an opportunity to make advances and develop tools to a improve the general quality of data analysis in all areas. 

Our work is related to the concepts of Bayesian and frequentist inference in that typical presentations of those paradigms only allow for a single iteration through the data. Whether we are computing a posterior distribution or calculating an hypothesis test, both paradigms take a single pass through the data to produce a result. There is typically no discussion of what happens next once the result is observed because of a lack of context. Therefore there is no real basis for initiating a second iteration. However, should a second iteration be initiated for whatever reason, our framework allows for the analyst to take either a Bayesian or frequentist approach in the next pass through the data. Hence, our approach here is entirely complementary to existing paradigms for inference.

The framework we have presented here is a first step towards characterizing the data analytic process that admits a number of new directions to explore. One such direction involves using the expected information gain and anomaly information gain criteria defined in Section~\ref{sec:analyticstep} to choose between tools in an analytic step. We outline how that might work in Supplemental Note~\ref{sec:choosingtools}. Another direction involves the process by which we set up the next iteration. The end of each iteration produces a contrast between the analyst's expectations and the observed output. The analyst must then react to that contrast and generate new ideas to pursue with the data. While a formal framework for how to think about that process is not yet clear, an obvious starting point would be to consider whether the output of the iteration is unexpected or as-expected. Unexpected results, in particular, may require an investigation within the data to determine their cause, and developing tools to accelerate this process could be an interesting area of future work. Another direction to explore could examine the minimal requirements for stopping the iterative process. In practice, analytic iterations may stop for a variety of reasons, but it may be possible to define a more formal stopping criterion to ensure that specific goals are achieved or a minimal amount of information is gained.

The area of data analysis is lacking in knowledge that can be applied in a range of applications and analytic situations often have unique circumstances. While our model of analytic iteration is just that---a model, the benefit of having some approximation to this process is that it allows for the development of new tools, approaches, or advice that can potentially be applied broadly. 



\clearpage







\bibliographystyle{agsm}
\bibliography{refs}

\clearpage

\appendix

\clearpage
{\huge \noindent Appendix}

\hrule

\vspace*{0.75cm}


{\Large \noindent Modeling Data Analytic Iteration With Probabilistic Outcome Sets}

\vspace*{0.75cm}

{\large \noindent Roger D. Peng, Stephanie C. Hicks}

\vspace*{0.5cm}



\section{Supplemental Notes}

\subsection{Proof of Theorem~\ref{theorem:different-info}}
\label{sec:suppnote1}


\begin{proof}
In this setting with two analysts, there are four possible scenarios when $\bY$ is observed: 
\begin{enumerate}
\item $\{\bY\in\epos_A\}\cap\{\bY\in\epos_B^c\}$, analyst $A$ as-expected, analyst $B$ unexpected;
\item $\{\bY\in\epos_A^c\}\cap\{\bY\in\epos_B\}$, analyst $A$ unexpected, analyst $B$ as-expected;
\item $\{\bY\in\epos_A\}\cap\{\bY\in\epos_B\}$, both analysts as-expected;
\item $\{\bY\in\epos_A^c\}\cap\{\bY\in\epos_B^c\}$, both analysts unexpected.
\end{enumerate}
If we consider Scenario~1, we know from Definition~\ref{def:exp-pot-outcome} that any expected potential outcome set must have a probability assigned by the analyst of greater than $0.5$. Also, any anomaly potential outcome set must have probability less than $0.5$. Therefore, it must have been that $\Phat_A(\boldy\in\epos_A) > 0.5$ and that $\Phat_B(\boldy\in\epos_B^c) < 0.5$, which implies that $\Phat_A(\boldy\in\epos_A)\ne\Phat_B(\boldy\in\epos_B^c)$. In this case, we know that for the observed information, $-\log\Phat_A(\boldy\in\epos_A)\ne-\log\Phat_B(\boldy\in\epos_B^c)$ because the function $-\log$ is strictly decreasing on $(0,1)$. A similar argument can be made for Scenario~2. For Scenario~3, we assumed that $\Phat_A(\boldy\in\epos_A)\ne\Phat_B(\boldy\in\epos_B)$. Therefore, should Scenario~3 occur, the information gain experienced by analyst $A$ and $B$ will be $-\log\Phat_A(\boldy\in\epos_A)$ and $-\log\Phat_B(\boldy\in\epos_B)$, respectively, which cannot be equal under the assumption. Finally, for Scenario~4, the assumptions imply that $\Phat_A(\boldy\in\epos_A^c)\ne\Phat_B(\boldy\in\epos_B^c)$ and therefore the information gain for the two analysts must also not be equal. Because in each of the four possible scenarios the probability of observing that outcome differs between the two analysts, the information gain also differs between the analysts.
\end{proof}

\subsection{Proof of Corollary~\ref{corollary:different-info}}
\label{sec:suppnote2}

\begin{proof}
In the proof of Theorem~\ref{theorem:different-info} above, we noted that there are four possible scenarios upon observing the output~$\bY$. If we know that the analysts experienced the same information gain, then Scenarios~1 and~2 are not possible since those scenarios cannot result in identical information gain. In other words, if the observed output~$\bY$ is as-expected for one analyst and unexpected for the other analyst, they cannot have experienced the same information gain. Under Scenario~3, if both analysts experienced the same information gain, that implies that $-\log\Phat_A(\boldy\in\epos_A)=-\log\Phat_B(\boldy\in\epos_B)$. It therefore follows that $\Phat_A(\boldy\in\epos_A)=\Phat_B(\boldy\in\epos_B)$. Under Scenario~4, if both analysts experienced the same information gain, that implies that $-\log\Phat_A(\boldy\in\epos_A^c)=-\log\Phat_B(\boldy\in\epos_B^c)$. It therefore follows that $\Phat_A(\boldy\in\epos_A^c)=\Phat_B(\boldy\in\epos_B^c)$, which implies that $\Phat_A(\boldy\in\epos_A)=\Phat_B(\boldy\in\epos_B)$.
\end{proof}

\subsection{Choosing Between Tools in Analytic Iteration}
\label{sec:choosingtools}

A fundamental problem that must be addressed at each step of an analytic iteration is the question of what tool to apply to the data. This question may be driven by the output from the previous iteration, which might have been as-expected or unexpected. Assume that for the next step the analyst can choose from a collection of tools informative $T_j$, $j=1,\dots,J$ to apply to the data $\bX$ in order to produce potential outcomes $\boldy_j=T_j(\bX)$. Given $T_j$ and $\bX$, there is an associated complete potential outcome set $\cpos_j$ in which $\boldy_j$ can fall. Furthermore, the analyst can construct expected and anomaly potential outcome sets $\epos_j$ and $\apos_j$ based on their own knowledge of $\bX$ and $T_j$. The choice the analyst must then make is between the available triples $(T_j,\epos_j,\apos_j)$, $j=1,\dots,J$. 

In the previous section, we proposed two criteria---expected information gain $\widehat{H}$, and maximum potential information gain $\widehat{M}$---as metrics for characterizing the potential information gain from taking an analytic step. We suggest here that these criteria could be used to prioritize different triples $(T_j,\epos_j,\apos_j)$ before observing the outcome when analyzing data in practice. Here, we describe how these criteria might be applied and how the different criteria may result in different choices.

Let $\widehat{H}_j$ ($j=1,\dots,J)$ be the expected information gain associated with the triple $(T_j,\epos_j,\apos_j)$. This is the expected information gained from applying $T_j$ to the dataset $\bX$ with the expected and anomaly potential outcome sets $\epos_j$ and $\apos_j$. If $\Phat$ is the analyst's probability model for the data, then we have
\[
\widehat{H}_j
=
\Phat(\boldy_j\in\epos_j)\left[-\log\Phat(\boldy_j\in\epos_j)\right]
+
\Phat(\boldy_j\in\apos_j)\left[-\log\Phat(\boldy_j\in\apos_j)\right].
\]
Given $\Phat$ and the triple $(T_j,\epos_j,\apos_j)$, the expected information gain $\widehat{H}_j$ can be computed before applying $T_j$ to the data. Therefore, each of the $J$ triples $(T_j,\epos_j,\apos_j)$ can be evaluated based on $\widehat{H}_j$ and the analyst could choose the triple $(T_j,\epos_j,\apos_j)$ that maximizes $\widehat{H}_j$ over $j=1,\dots,J$. If the anomaly information gain criteria were used, then we could similarly define $\widehat{M}_j$ as the anomaly information gain for triple $(T_j,\epos_j,\apos_j)$, i.e.
$\widehat{M}_j=-\log\Phat(\boldy_j\in\apos_j).$
The analyst could then choose the triple $(T_j,\epos_j,\apos_j)$ that maximizes $\widehat{M}_j$ over $j=1,\dots,J$.

The definitions of $\widehat{H}_j$ and $\widehat{M}_j$ suggest that each criterion is likely to favor different methods to apply to the data. The expected information gain criterion ($\widehat{H}_j$) will favor method triples that have $\Phat(\boldy_j\in\epos_j)$ close to $0.5$ because $\widehat{H}_j$ is maximized when $\Phat(\boldy_j\in\epos_j)$ is exactly equal to 0.5~(although that is not allowed given the definition of $\epos$). The anomaly information gain ($\widehat{M}_j$) will favor triples where $\Phat(\boldy_j\in\epos_j)$ is very near $1$, implying that $\Phat(\boldy_j\in\apos_j)$ is near zero and $-\log\Phat(\boldy_j\in\apos_j)$ is large. In the next section, we attempt to characterize common data analysis scenarios in which the different criteria might be more relevant for choosing an analytic tool.

\end{document}